\newtheorem{lm}{Lemma}
\newtheorem{thm}[lm]{Theorem}
\theoremstyle{definition}
\newtheorem{defi}[lm]{Definition}
\newtheorem{qq}[lm]{Problem}
\newtheorem{eg}[lm]{Example}
\renewcommand{\(}{\left(}
\renewcommand{\)}{\right)}
\DeclareMathOperator{\capa}{cap}
\title{Mitigating Overexposure in Viral Marketing}
\author[1]{Rediet Abebe}
\author[2]{Lada A. Adamic}
\author[1]{Jon Kleinberg}
\affil[1]{Cornell University}
\affil[2]{University of Michigan}
\date{}                     
\begin{document}
\maketitle

\newcommand{\omt}[1]{}
\newcommand{\xhdr}[1]{\vspace{1.7mm}\noindent{{\bf #1.}}}

\begin{abstract}
In traditional models for word-of-mouth recommendations and
viral marketing, the objective function has generally been based
on reaching as many people as possible.
However, a number of studies have shown that the indiscriminate
spread of a product by word-of-mouth can result in 
{\em overexposure}, reaching people who evaluate it negatively.
This can lead to an effect in which the over-promotion of a product
can produce negative reputational effects, by reaching a part of
the audience that is not receptive to it.

How should one make use of social influence when there is a risk
of overexposure?
In this paper, we develop and analyze a theoretical model for this process;
we show how it captures a number of the qualitative phenomena associated
with overexposure, and for the main formulation of our model, we provide
a polynomial-time algorithm to find the optimal marketing strategy.
We also present simulations of the model on real network topologies,
quantifying the extent to which our optimal strategies outperform
natural baselines. 

\end{abstract}

\section{Introduction}\label{intro}

A rich line of research has studied the effectiveness of marketing
strategies based on person-to-person recommendation within a 
social network --- a process often termed {\em viral marketing}
\cite{jurvetson-viral}
and closely connected to the broader sociological literature on
the {\em diffusion of innovations} in social networks \cite{rogers-diffusion}.
A key genre of theoretical question that emerged early in this literature
is the problem of optimally ``seeding'' a product 
in a social network through the selection of a set 
of initial adopters \cite{dr,kkt,rd}.
In this class of questions, we consider a firm that
has a product they would like to market to a group
of agents on a social network;
it is often the case that the
firm cannot target all the participants in the network, and 
so they seek to target the most influential ones so as to maximize
exposure and create a cascade of adoptions. 
Approaches to this question have generally been based on objective
functions in which the goal is to maximize the number of people
who are reached by the network cascade --- or more generally, in 
which the objective function monotonically increases in the
number of people reached.

\xhdr{The dangers of overexposure}
Separately from this, lines of research in both marketing
and in the dynamics of on-line information have provided diverse evidence
that the benefits of a marketing campaign are not in fact purely increasing
in the number of people reached.
An influential example of such a finding is the {\em Groupon effect},
in which viral marketing via Groupon coupons leads to lower Yelp
ratings. In \cite{group}, they note the negative effect Groupon has on
average Yelp ratings and provide arguments for the underlying
mechanism; one of their central hypotheses is that by using Groupon
as a matchmaker, businesses may be attracting customers from a portion
of the population that is less inclined to like the product. 
In another example, Kovcs and Sharkey \cite{good}  discuss a setting
on Goodreads where books that win 
prestigious awards (or are short-listed for them) 
attract more readers following the announcement, which
again leads to a drop in the average rating of the book on the platform.
Aizen et al. \cite{aizen-batting} show a similar effect for on-line
videos and other media; they receive a discontinuous drop in their ratings
when a popular blog links to them, driving users to the item who
may not be interested in it.

Research in marketing has shown that exposure to different groups and
influence between such groups can help or hurt adoption \cite{berger2007consumers,hb,turf}. 
For example, Hu and Van den Bulte \cite{hb} argue that 
agents adopt products to boost their status;
and so as word-of-mouth effects for a product become stronger
among middle-income individuals, 
there might be a negative impact on adoption among 
higher-income individuals. Similar behavior is observed in health campaigns. 
In \cite{health}, Wakefield et al. discuss the importance of segmenting populations 
and exposing groups to anti-smoking campaigns whose themes the group 
is most susceptible to in order to maximize the impact of future campaigns. 

We think of these effects collectively as different forms of
{\em overexposure}; while reaching many potential customers is
not a concern in and of itself, the empirical research above suggests that
there may exist particular subsets of the 
population --- potentially large subsets --- 
who will react negatively to the product.
When a marketing cascade reaches members of this negatively inclined
subset, the marketing campaign can suffer negative payoff that may
offset the benefits it has received from other parts of the population.
This negative payoff can come in the form of harm to the
firm's reputation, either through latent 
consumer impressions and effects on 
brand loyalty \cite{reputation2,reputation1}
or through explicitly visible negative reviews on rating sites.

Despite the importance of these considerations in marketing, 
they have not been incorporated into models of influence-based
marketing in social networks.
What types of algorithmic issues arise when we seek to 
spread a word-of-mouth cascade through a network, but must
simultaneously ensure that it reaches the ``right'' part of the
audience --- the potential customers who will like the product,
rather than those who will react negatively to it?

\xhdr{The present work: A model of cascades with the risk of overexposure}
In this paper, we propose a basic theoretical model for 
the problem of seeding a cascade when there are benefits from
reaching positively inclined customers and costs from reaching
negatively inclined customers.

There are many potential factors that play a role in the distinction
between positively and negatively inclined customers, 
and for our model we focus on a stylized framework in which each product
has a known parameter $\phi$ in the interval $[0,1]$
that serves as some measure for the breadth of its appeal.
At this level of generality, this parameter could serve as a proxy for a number of things,
including quality; or a one-dimensional combination of price and quality;
or --- in the case where the social network represents a population
defined by a specific interest --- compatibility with the core interests 
of network's members.

Each node in the network is an {\em agent} who will evaluate the
product when they first learn of it; agents differ in how 
{\em critical} they are of new products, with agents of low criticality
tending to like a wider range of products and agents of high criticality 
tending to reject more products.
Thus, each agent $i$ has a criticality parameter
$\theta_i$ in the interval $[0, 1]$; since we assume that
the firm has a history of marketing products to this network 
over a period of time, it knows this parameter $\theta_i$.
When exposed to a product, an agent accepts the product if 
$\phi \geq \theta_i$ and advertises the
product to their neighbors, 
leading to the potential for a cascade.
However, if $\phi < \theta_i$, then the agent rejects the product,
which results in a negative payoff to the firm; the cascade stops
at such agents $i$, since they do not advertise it to their neighbors.

The firm's goal is to advertise the product to a subset of the nodes
in the network --- the {\em seed set} --- 
resulting in a potential cascade of further
nodes who learn about the product, so as to maximize its overall payoff.
This payoff includes a positive term for each agent $i$ who sees
the product and has $\phi \geq \theta_i$,
and a negative term for each agent $i$ who sees
the product and has $\phi < \theta_i$;
agents who are never reached by the cascade never find out about
the product, and the firm gets zero payoff from them.

\xhdr{Overview of Results}
We obtain theoretical results for two main settings of this problem: the 
{\em unbudgeted case}, in which the firm can initially advertise the
product to an arbitrary seed set of nodes, and the
{\em budgeted case}, in which the firm can advertise the project
to at most $k$ nodes, for a given parameter $k$.
We note that typically in influence maximization problems, the
unbudgeted case is not interesting: if the payoff is monotonically
increasing in the number of nodes who are exposed to the product,
then the optimal unbudgeted strategy is simply to show the product
to everyone.
In a world with negative payoffs from overexposure, however, the
unbudgeted optimzation problem becomes non-trivial:
we must tradeoff the benefits of showing the product to customers
who will like it against the negatives that arise when these
customers in turn share it with others who do not. 

For the unbudgeted problem, we give a polynomial-time algorithm for finding
the optimal seed set.  The algorithm uses network flow techniques
on a graph derived from the underlying social network with
the given set of parameters $\theta_i$.
In contrast, we provide an NP-hardness result for 
the budgeted problem.

We then provide a natural generalization of the model:
rather than each agent exhibiting only two possible behaviors
(rejecting the product, or accepting it and promoting it),
we allow for a wider range of agent behaviors.
In particular, we will assume each agent has 
three parameters which control 
whether the agent ignores the product, views but rejects the product, 
accepts the product but does not broadcast it to its neighbors, and 
accepts the product and advertises it to neighbors. 
We show how to extend our results to this more general case,
obtaining a polynomial-time algorithm for the unbudgeted case
and an NP-hardness result for the budgeted case.

Finally, we perform computational simulations of our algorithm for
the unbudgeted case on sample network topologies derived from
moderately-sized social networks.
We find an interesting effect in which the performance of the
optimal algorithm transitions between two behaviors as $\phi$ varies.
For small $\phi$ the payoff grows slowly while a baseline that promotes
the product to every agent $i$ with $\theta_i < \phi$ achieves
negative payoff (reflecting the consequences of overexposure).
Then, for large $\phi$, the payoff grows quickly, approaching a simple
upper bound consisting of all $i$ for which $\theta_i < \phi$.

\section{Preliminaries}\label{prelim}

There is a product with a parameter $\phi \in [0, 1]$,
measuring the breadth of its appeal.
$G$ is an unweighted, undirected graph with $n$ agents as 
its nodes.
For each agent $i$, the agent's 
criticality parameter $\theta_i \in [0, 1]$ measures
the minimum threshold for $\phi$ the agent demands before
adoption. Thus, higher values of $\theta_i$ correspond to more critical
agents. We assume that these values are fixed and known to the firm.

The firm chooses an initial set of agents $S \subseteq V$ 
to ``seed'' with the product.  
If an agent $i$ sees the product, it accepts it if $\theta_i \leq \phi$
and rejects it if $\theta_i > \phi$.
We say that an agent $i$ is {\em accepting} in the former case and 
{\em rejecting} in the latter case. 
Each accepting agent who is exposed to 
the product advertises it to their neighbors,
who then, recursively, are also exposed to the product.
We will assume throughout that the firm chooses a seed set consisting
entirely of accepting nodes (noting, of course, that rejecting nodes
might subsequently be exposed to the product after nodes in 
the seed set advertise it to their neighbors).

We write $V(S)$ for the set of agents exposed to the product if 
the seed set is $S$. Formally, $V(S)$ is the set of all agents $i$ who have a
path to some node in $j \in S$ such that all of the internal nodes on the $i$-$j$ path are accepting agents; this is the ``chain of recommendations" 
by which the product reached $i$. 
 Among the nodes in $V(S)$, we define $V^+(S)$ to be the set of agents  who accept the product and $V^-(S)$ to be the set of 
agents who reject the product.

The payoff function associated with seed set $S$ is:
\begin{align}\label{payoff1}
\pi (S) = |V^+(S)|- |V^-(S)|.
\end{align}
We can, more generally, assume that there is a payoff of $p$ to 
accepting the product and a negative payoff of $q$ to rejecting 
the product, and we set the payoff function to be: 
\begin{align}\label{payoff2}
\pi(S) = p |V^+(S)| - q |V^-(S)|.
\end{align}
We will call this the \emph{generalized payoff function}, and simply 
refer to Equation \ref{payoff1} as the \emph{payoff function}. The 
overarching question then is: 

\begin{qq}\label{qq1}
Given a set of agents $V$ with criticality parameters $\theta_i$ ($i \in V$) 
on a social network $G = (V, E)$, and given a product of quality $\phi$, 
what is the optimal seed set $S \subseteq V$ that the firm should target 
in order to maximize the payoff given by Equation (\ref{payoff2})?
\end{qq}


In contrast to much of the influence maximization literature, we assume that
the agents' likelihood of adoption, once exposed to this product, is
not affected by which of their neighbors have accepted or rejected the
product. This differs from, for instance, models in which each
agent requires a certain fraction (or number) of its neighbors to have
accepted the product before it does; or models where probabilistic
contagion takes place across the edges.
These all form interesting directions for further work; here, however,
we focus on questions in which the intrinsic appeal of the product, via
$\phi$, determines adoption decisions, and the social network provides
communication pathways for other agents to hear about the product.

Before proceeding to the main result, we develop some further
terminology that will be helpful in reasoning about the seed sets. 

\begin{defi}
Let $i$ be an accepting node, and let $S = \{i\}$.
Then we say that $V(S)$ is the {\em cluster} of $i$, denoted by $C_i$;
we call $V^+(S)$ the {\em interior} of $C_i$ and denote it by $C_i^o$,
and we call $V^-(S)$ the {\em boundary} of $C_i$ and denote it by $C_i^b$.
\end{defi}

We denote the payoff corresponding to the seed set $S= \{i\}$ by $\pi_i$. 
Note that, 

$$\pi_i = p|C_i^o| - q|C_i^b|.$$

\begin{lm}
Given an accepting node $i \in V$,
and a node $j \in C_i^o$, we have $C_i = C_j$. 
\label{lm:cluster-equiv}
\end{lm}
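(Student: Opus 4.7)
The plan is to prove the two inclusions $C_i \subseteq C_j$ and $C_j \subseteq C_i$ separately, using the fact that we can splice paths together through the accepting nodes $i$ and $j$.

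First I would establish a key observation. Since $j \in C_i^o$, there is a simple path $P$ from $i$ to $j$ whose internal nodes are all accepting; furthermore, both endpoints $i$ and $j$ are accepting (by assumption and by the definition of $C_i^o$). So \emph{every} vertex of $P$ is an accepting node.

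For the inclusion $C_i \subseteq C_j$, I would take an arbitrary $k \in C_i$ and produce a witness path from $k$ to $j$ through accepting internal nodes. By definition of $C_i$ there is a simple path $Q$ from $i$ to $k$ whose internal nodes are accepting. Concatenating the reverse of $P$ with $Q$ yields a walk $W$ from $j$ to $k$, and because $G$ is undirected every non-endpoint vertex of $W$ — namely the internal nodes of $P$, the node $i$, and the internal nodes of $Q$ — is an accepting node. A standard graph-theoretic fact says that any walk from $j$ to $k$ contains a simple $j$-$k$ path $R$ whose vertex set is a subset of the vertex set of $W$; its internal vertices exclude $j$ and $k$ and therefore all lie in the set of accepting non-endpoint vertices of $W$. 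Hence $R$ certifies $k \in C_j$.

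For the reverse inclusion $C_j \subseteq C_i$, I would first apply what was just proved to $k = i$. Since $i \in C_i$ via the trivial zero-length path (no internal nodes), the argument above gives $i \in C_j$; and since $i$ is accepting, in fact $i \in C_j^o$. Now the hypotheses on the pair $(i,j)$ hold symmetrically for the pair $(j,i)$, so running the same splicing argument with the roles of $i$ and $j$ swapped gives $C_j \subseteq C_i$.

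The only subtlety — and the step I expect to be the main obstacle to write cleanly rather than mathematically deep — is the walk-to-simple-path extraction: one must check carefully that when we pass from the concatenated walk $W$ to a simple path, the internal vertices of the resulting path remain accepting. This is immediate because simple paths do not repeat vertices (so the endpoints $j$ and $k$ cannot reappear internally), and every other vertex of $W$ is accepting by construction.
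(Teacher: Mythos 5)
Your proof is correct and rests on the same idea as the paper's: since the graph is undirected, the accepting path witnessing $j \in C_i^o$ can be reversed, so reachability through accepting nodes is symmetric. In fact your write-up is more complete than the paper's, which only verifies that $i$ is exposed when $j$ is seeded and leaves the splicing argument for the two inclusions $C_i \subseteq C_j$ and $C_j \subseteq C_i$ implicit; your walk-to-simple-path extraction handles that step correctly.
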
 

\begin{proof}
If $j$ is in the interior of $C_i$, then there exists a path 
$(k_1, k_2, \cdots, k_\ell)$ in $G$, where $i = k_1$ and $j = k_\ell$ such 
that each node along the path has $\theta \leq \phi$. (That is, each node 
$k_i$ is exposed to and accepts the product as a result of $k_{i-1}$'s 
advertisement.) We would like to prove that if $S = \{j\}$, then $i$ would 
be exposed to the product. Equivalently, we want to show 
there exists a path from 
$j$ to $i$ of nodes with $\theta \leq \phi$; but this is precisely the path 
$(k_\ell, k_{\ell - 1}, \cdots, k_1)$. 
\end{proof}

For an arbitrary seed set $S$, the set $V(S)$ may consist of multiple
interior-disjoint clusters, which we label by $\{C_1, C_2, \cdots, C_k\}$, 
where $k \leq |S|$. Note that each of these clusters might be associated with
more than one agent in the seed set and that $\cup_{i = 1}^k C_i = V(S)$. 
(Likewise, $\cup_{i = 1}^k C_i^o = V^+(S)$ and $\cup_{i = 1}^k C_i^b = V^-(S)$.)

Given a seed-set $S$ and corresponding clusters, a direct consequence of Lemma \ref{lm:cluster-equiv} is that adding more nodes 
already contained in these clusters to the seed-set does not change 
the payoff. 

\begin{lm}
Given a set $S'$ of accepting nodes such that $S \subseteq S' \subseteq V(S)$,
we have $\pi(S) = \pi(S')$. 
\end{lm}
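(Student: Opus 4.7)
The plan is to show that $V(S') = V(S)$; once we have that, the equality $V^+(S') = V^+(S)$ and $V^-(S') = V^-(S)$ follows from the definitions, and hence $\pi(S') = \pi(S)$ by either payoff formula.

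For the inclusion $V(S) \subseteq V(S')$, I would argue directly from the definition of $V(\cdot)$: any node reachable from some $i \in S$ along a path of accepting internal nodes is also reachable from $i \in S'$ along the same path, since $S \subseteq S'$.

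For the reverse inclusion $V(S') \subseteq V(S)$, take any $x \in V(S')$. By definition, there is some $j \in S'$ and a path from $x$ to $j$ whose internal nodes are accepting. Since $S' \subseteq V(S)$, the node $j$ lies in $V(S)$, and since $j$ is assumed to be accepting, in fact $j \in V^+(S)$, so $j$ lies in the interior $C_i^o$ of some cluster $C_i$ with $i \in S$. By Lemma \ref{lm:cluster-equiv}, $C_j = C_i$, meaning $V(\{j\}) = V(\{i\}) \subseteq V(S)$. Now $x$ is reachable from $j$ via a path of accepting internal nodes, so $x \in V(\{j\}) = C_i \subseteq V(S)$, which is what we needed.

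There is no real obstacle here; the statement is essentially a bookkeeping consequence of Lemma \ref{lm:cluster-equiv}. The only thing to keep straight is that the hypothesis ``$S'$ is a set of accepting nodes'' is what lets us invoke the cluster-equivalence lemma at each $j \in S' \setminus S$; without it, a new seed in $V^-(S)$ could in principle start a fresh cascade from a rejecting boundary node, but the hypothesis rules this out.
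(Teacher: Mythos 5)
Your proof is correct and follows the same route the paper intends: the paper states this lemma as a direct consequence of Lemma \ref{lm:cluster-equiv} without writing out the details, and your double-inclusion argument for $V(S') = V(S)$ is precisely the fleshed-out version of that derivation. Your closing remark about why the hypothesis that $S'$ consists of accepting nodes is needed is also on point.
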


It therefore suffices to seed a single agent 
within a cluster. Given a cluster $C_i$, we will simply pick an arbitrary 
node in the interior of the cluster to be the canonical node $i$ and use that 
to refer to the cluster even if $C_i$ is formed as a result of seeding another 
node $j \in C_i^o$.

\section{Main Model}\label{unbudgeted}

Given that all $\theta_i$ are known to the firm, a naive approach would suggest to seed all $i$ where $\pi_i \geq 0$. While this is guaranteed to give a nonnegative payoff, $S$ need not be optimal. 

\begin{eg}\label{counter}
Consider the graph below, where nodes in blue accept the product and those in red reject the product. 

\begin{figure}[!ht]
\centering
\begin{tikzpicture}[scale=1.2,auto=left,every node/.style={circle,fill=gray!20}]\label{fig:countereg}
  \node[fill=blue!50] (n1) at (0, 0.5) {1};
  \node[fill=blue!50] (n2) at (0, 1.5) {2};
  \node[fill=red!50] (n3) at (1.5, 0) {3};
  \node[fill=red!50] (n4) at (1.5, 1) {4};
  \node[fill=red!50] (n5) at (1.5, 2) {5};
  \node[fill=blue!50] (n6) at (3, 0.5) {6};
  \node[fill=blue!50] (n7) at (3, 1.5) {7};

  \foreach \from/\to in {n1/n4,  n1/n2, n1/n5, n1/n3, n2/n4, n2/n5, n2/n3, n3/n6, n3/n7, n4/n6, n4/n7, n5/n6, n5/n7, n6/n7}
    \draw (\from) -- (\to);


\end{tikzpicture}
\end{figure}
Suppose $ p = q = 1$. Then, a naive approach would set $S  = \emptyset$, since each of the resulting clusters to the blue nodes has negative payoff. However, setting $S = \{1, 2, 6, 7\}$ has payoff $1$. 
\end{eg}

This phenomenon is a result of the fact that the clusters $C_i$ might {have} boundaries that intersect non-trivially. Thus, there could be agents whose $\pi_i < 0$, but $C_i^b$ is in a sense ``paid for'' by seeding other agents;
and hence we could have a net-positive payoff from including $i$
subject to seeding other agents whose cluster boundaries intersect with $C_i^b$. Using this observation, we will give a polynomial-time algorithm for finding the optimal seed set under the generalized payoff function using a network flow argument. We first begin by constructing a flow network.

Given an instance defined by $G$ and $\phi$, we let
$\{C_1, C_2, \cdots, C_k\}$ be the set of all distinct clusters in $G$,
with disjoint interiors.
We form a flow network as follows: set $A = \{1,
2, \cdots, k\}$ corresponding to the canonical nodes of the clusters
above and $R$ be the set of agents in the boundaries of all clusters.
We add an edge from
the source node $s$ to each node $i \in A$ with capacity $p \cdot
|C_i^o|$ and label this value by $\capa_i$, and an edge from each node
$j \in R$ to $t$ with capacity $q$. We add an edge between $i$ and $j$
if and only if $j \in C_i^b$, and set these edges to have infinite
capacity. We denote this corresponding flow network by $G_N$.

\usetikzlibrary{arrows}

\begin{eg}

For the above example, $G_N$ is: 

\begin{figure}[!ht]
\centering
\begin{tikzpicture}
[scale=1,auto=left,every node/.style={circle,fill=gray!20}]
\tikzset{edge/.style = {->,> = latex'}}

\node[fill=blue!50] (n1) at (0, 0) {\small{1, 2}};
\node[fill=blue!50] (n2) at (0, 2) {\small{6, 7}};
\node[fill=red!50] (n3) at (2, 0) {3};
\node[fill=red!50] (n4) at (2, 1) {4};
\node[fill=red!50] (n5) at (2, 2) {5};

\node (n7) at (-2, 1) {s};
\node (n8) at (4, 1) {t};

\draw[edge] (n7) to (n1);
\draw[edge] (n7) to (n2);
\draw[edge] (n1) to (n3);
\draw[edge] (n1) to (n4);
\draw[edge] (n1) to (n5);
\draw[edge] (n2) to (n3);
\draw[edge] (n2) to (n4);
\draw[edge] (n2) to (n5);
\draw[edge] (n3) to (n8);
\draw[edge] (n4) to (n8);
\draw[edge] (n5) to (n8);

\end{tikzpicture}
\end{figure}

In this example, the edges out of $s$ have capacities $2p$ and edges into $t$ have capacities $q$. Edges between blue and red notes have infinite capacity. Assuming $4p \geq 3q$, the min-cut $(X,Y)$ has $Y = \{t\}$ and all other nodes in $X$.
\end{eg}


\begin{lm}
Given a min-cut $(X,Y)$ in $G_N$, the optimal seed set in $G$ is
$A \cap X$.
\end{lm}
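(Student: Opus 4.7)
The plan is to set up a bijection between seed sets (equivalently, subsets of canonical cluster representatives) and $s$-$t$ cuts of finite capacity in $G_N$, under which cut capacity equals a constant minus the payoff. This reduces optimal seeding to computing a min-cut.

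First I would reduce the search space. By the preceding lemmas, it suffices to consider seed sets $S \subseteq A$, because any seed set yields the same payoff as some subset of canonical nodes (one per activated cluster). For such an $S$ we have $V^+(S) = \bigcup_{i \in S} C_i^o$ (a disjoint union, since interiors are disjoint), and $V^-(S) = \bigcup_{i \in S} C_i^b$, so $\pi(S) = p \sum_{i \in S}|C_i^o| - q|V^-(S)|$. Let $T = \sum_{i \in A} p|C_i^o|$, a quantity depending only on the instance.

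Next I would relate cuts to seed sets in both directions. \emph{From cut to seed set:} Given a finite-capacity $s$-$t$ cut $(X, Y)$, no infinite edge $i \to j$ (with $j \in C_i^b$) may cross, so $i \in X$ forces $j \in X$. Letting $S = A \cap X$, we therefore get $V^-(S) \subseteq R \cap X$. The capacity of the cut is
\begin{equation*}
\mathrm{cap}(X,Y) \;=\; \sum_{i \in A \cap Y} p\,|C_i^o| \;+\; q\,|R \cap X| \;\geq\; T - p\,|V^+(S)| + q\,|V^-(S)| \;=\; T - \pi(S).
\end{equation*}
\emph{From seed set to cut:} Given any $S \subseteq A$, define $X_S = \{s\} \cup S \cup V^-(S)$ and $Y_S$ its complement. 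I would check that no infinite edge crosses $X_S \to Y_S$ (the only infinite edges out of $S$ go into $C_i^b \subseteq V^-(S) \subseteq X_S$), so the cut is finite, and its capacity is exactly $\sum_{i \in A \setminus S} p|C_i^o| + q|V^-(S)| = T - \pi(S)$.

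Combining the two directions, the minimum cut value equals $T - \max_{S \subseteq A} \pi(S)$, and a min-cut $(X,Y)$ recovers a maximizer via $S = A \cap X$ (with the inequality above tight because a min-cut can have $R \cap X$ no larger than $V^-(S)$ without increasing capacity). The main bookkeeping obstacle is ensuring the inequality is tight at the min-cut: I would argue that if $R \cap X$ strictly contains $V^-(A \cap X)$, one may move the extra boundary nodes to $Y$ without creating an infinite crossing (their only incoming infinite edges come from canonical nodes in $A \cap X$, which by definition of $V^-(A\cap X)$ do not point to them), strictly decreasing the cut capacity — a contradiction. This confirms $A \cap X$ is optimal.
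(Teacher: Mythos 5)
Your proof is correct and follows essentially the same min-cut correspondence as the paper, computing the cut capacity as a constant minus the payoff. You are in fact more careful than the paper's own proof: you explicitly construct a finite cut from each seed set (needed to show the min-cut actually attains $T - \max_S \pi(S)$) and handle the possibility that $R \cap X$ strictly contains $V^-(A \cap X)$, both of which the paper glosses over.
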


\begin{proof}
The min-cut $(X,Y)$ must have value at most $q |R|$ since we can
trivially obtain that by setting the cut to be $(V(G_N) \backslash
\{t\}, \{t\})$. 
Given a node $i \in X$, which we recall corresponds to
the canonical node of a cluster, 
if a node $j \in R$ is exposed to the product as a result of
seeding any node in the cluster, then $j \in X$. Otherwise, we would
have edge $(i, j)$ included in the cut, which has infinite capacity
contradicting the minimality of the cut $(X,Y)$. Therefore, the
min-cut will include all nodes in the seed set as well as all nodes
that are exposed to the product as a result of the corresponding
seed set in $S$.

Note that the edges across the cut are of two forms: $(s, i)$ or $(j,
t)$, where $i \in A$ and $j \in R$. The first set of edges contribute
$\sum_{i \in A \cap Y} \capa_i$ (recall $\capa_i = p \cdot | C_i^o|$)
and the latter contributes $| R \cap X| q$. Therefore, the objective
for finding a min-cut can be equally stated as minimizing,
$${\sum_{i \in A \cap Y} \capa_i + |R \cap X| q}.$$
over cuts $(X,Y)$.
Note that $$\sum_{i \in A } \capa_i = \sum_{i \in A \cap X} \capa_i +  \sum_{i \in A \cap Y} \capa_i.$$ Therefore, we have:

\begin{align*}
&\( {\sum_{i \in A \cap Y} \capa_i + |R \cap X| q}\)\\
& = \( { \sum_{i \in A} \capa_i -  \sum_{i \in A \cap X} \capa_i + |R \cap X| q}\)\\
& =  \sum_{i \in A} \capa_i - \( {   \sum_{i \in A \cap X} \capa_i } - |R \cap X| q\)
\end{align*}

Note the term $\( {   \sum_{i \in A \cap X} \capa_i } - |R \cap X| q\)$
is precisely what the payoff objective function is maximizing, giving a correspondence between the min-cut and optimal seed set.
\end{proof}

We therefore have the main result of this section:

\begin{thm}
There is a polynomial-time algorithm for computing the optimal seed set for Problem \ref{qq1} when there are no budgets for the size of the seed set. 
\end{thm}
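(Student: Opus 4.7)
The plan is to show that each ingredient used by the preceding lemma can be produced in polynomial time, so that the lemma directly yields a polynomial-time algorithm. Concretely, the algorithm has three phases: (i) identify the clusters $C_1,\dots,C_k$ of $G$ together with their interiors $C_i^o$ and boundaries $C_i^b$; (ii) construct the flow network $G_N$; and (iii) compute a minimum $s$--$t$ cut $(X,Y)$ in $G_N$ and output $A \cap X$.

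For phase (i), I would first mark every node $v$ as accepting or rejecting by comparing $\theta_v$ to $\phi$; this takes $O(n)$ time. The interiors of the clusters are exactly the connected components of the subgraph of $G$ induced on the accepting nodes, so one pass of BFS or DFS (running in $O(n+m)$ time) produces all of the $C_i^o$ and identifies a canonical representative of each. For each component $C_i^o$, the boundary $C_i^b$ is the set of rejecting neighbors of vertices in $C_i^o$; these are collected by scanning the edges incident to each interior node, again within $O(n+m)$ total time. By Lemma \ref{lm:cluster-equiv} the cluster $C_i$ is well defined regardless of which interior node is chosen as the representative, so this procedure is unambiguous.

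For phase (ii), I build $G_N$ by adding a source $s$, a sink $t$, one node for each of the $k$ cluster representatives in $A$, and one node for each rejecting agent in $R$. The edges and capacities are assigned exactly as in the construction preceding the lemma: an $s \to i$ edge of capacity $p \cdot |C_i^o|$ for each $i \in A$, an $i \to j$ edge of infinite capacity whenever $j \in C_i^b$, and a $j \to t$ edge of capacity $q$ for each $j \in R$. Since $k \le n$ and $|R| \le n$, the network has $O(n)$ nodes and $O(n^2)$ edges, and can be written down in time polynomial in the size of $G$.

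For phase (iii), I compute a minimum $s$--$t$ cut $(X,Y)$ in $G_N$ using any standard polynomial-time max-flow algorithm; this is possible because all finite capacities are rational (indeed integer when $p,q$ are integers, and otherwise scalable to integers by clearing denominators). The preceding lemma then guarantees that $A \cap X$ is an optimal seed set, so returning it solves Problem \ref{qq1}. The total running time is dominated by the max-flow computation, which is polynomial in $n$ and $m$. The only step that required any real argument was the correspondence between cuts and seed sets, and that has already been handled by the preceding lemma, so no further obstacle remains.
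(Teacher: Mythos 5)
Your proposal is correct and follows exactly the route the paper intends: the theorem is stated as an immediate consequence of the preceding min-cut lemma, and your three phases (identifying clusters as connected components of the accepting subgraph, building $G_N$, and running a polynomial-time min-cut) simply make explicit the polynomial-time bookkeeping the paper leaves implicit. No gaps.
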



An interesting phenomenon is that the payoff is not monotone in $\phi$ even when considering optimal seed sets. Take the following example:

\begin{eg}

Suppose we are given the network below with the numbers specifying the $\theta_i$ for each corresponding node:

\begin{figure}[h]
\centering
\begin{tikzpicture}
  [scale=0.9,auto=left,every node/.style={circle,fill=gray!20}]

  \node[fill=blue!50] (n1) at (2, -1) {0.2};
  \node[fill=blue!50] (n2) at (2, 1) {0.2};
 \node[fill=blue!50] (n3) at (4, 0) {0.5};
 \node[fill=blue!50] (n4) at (6, 1) {1};
 \node[fill=blue!50] (n5) at (6, -1) {1};
 \node[fill=blue!50] (n6) at (8, 1) {1};
 \node[fill=blue!50] (n7) at (8, -1) {1};
  \foreach \from/\to in {n1/n2, n1/n3,n2/n3, n3/n4, n3/n5, n3/n6, n3/n7,n4/n5,n4/n6,n4/n7,n5/n6,n5/n7,n6/n7}
    \draw (\from) -- (\to);

\end{tikzpicture}
\end{figure}

If $\phi \in [0, 0.2)$, we cannot do better than the empty-set. If $\phi \in [0.2, 0.5)$, the seed set that includes either of the two left-most nodes gives a payoff of $1$, which is optimal. For the case where $\phi \in (0.5, 1)$, the empty-set is again optimal. 
\end{eg}

This example gives a concrete way to think about overexposure phenomena
such as the Groupon effect \cite{group} discussed in the introduction.
Viewed in the current terms, we could say that by
using Groupon, one could increase the broad-appeal measure of the product (e.g., cheaper, signaling higher quality, etc), which therefore exposes the product to portions of the market that would have previously not been exposed to it, and this could lead to a worse payoff. 

\section{Generalized Model}

We now consider the generalized model where there are three parameters
corresponding to each agent $i$, $\tau_i \leq \theta_i \leq \sigma_i$.
An agent considers a product if $\tau_i \leq \phi$, adopts a product
if $\theta_i \leq \phi$, and advertises it to their friends if
$\sigma_i \leq \phi$. 
If an agent is exposed to a product 
but $\phi < \tau_i$, then the payoff associated with the agent is 0. 
If $\phi \in [\tau_i, \theta_i)$, then the agent rejects the product, for
a payoff of $-q < 0$. 
As before, there is a payoff of $p > 0$ if the
agent accepts the product; however, the agent only advertises the
product to its neighbors after accepting if $\phi \geq \sigma_i$.
We therefore have four types of agents:

\begin{itemize}
\item Type I: Agents for which $\phi < \tau_i$,
\item Type II: Agents for which $\phi \in [\tau_i, \theta_i)$,
\item Type III: Agents for which $\phi \in [\theta_i, \sigma_i)$, and
\item Type IV: Agents for which $\phi \geq \sigma_i$. 
\end{itemize}

We denote the set of all agents of Type I by $T_1$ (and likewise for the other types). The basic model above is the special case where $\tau_i = 0$ and $\theta_i = \sigma_i$ for all $i \in V$. In this instance, we only have agents of Types II and IV. 

\begin{lm}
Given any seed set $S$, we note: 
\begin{enumerate}
\item $\pi(S) = \pi(S \cup T_1)$, 
\item $\pi(S) \leq \pi(S \cup T_3)$.
\end{enumerate}
\label{lm:types}
\end{lm}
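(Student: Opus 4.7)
The plan is to unpack the classification of agents and argue that nodes in $T_1$ are effectively inert while nodes in $T_3$ contribute only positively when added to the seed set. The key structural observation is that in the generalized model, propagation through the network occurs exclusively via Type IV agents, since only they satisfy $\phi \geq \sigma_i$. Consequently, the set $V(S)$ consists of $S$ itself together with every node reachable from $S$ by a path whose internal vertices are all in $T_4$. I would record this explicitly up front and use it as the backbone for both parts.

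For part (1), I would note that every $i \in T_1$ satisfies $\phi < \tau_i \leq \sigma_i$, so a Type I agent never advertises, regardless of whether it is seeded or reached through a cascade. Thus adding $T_1$ to $S$ creates no new paths along which the product can propagate, and the only difference between $V(S)$ and $V(S \cup T_1)$ is the inclusion of some additional nodes from $T_1$ itself. Since a Type I agent contributes $0$ to the payoff whether exposed or not, these extra nodes leave $\pi$ unchanged, giving $\pi(S) = \pi(S \cup T_1)$.

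For part (2), I would apply the same propagation principle: every $i \in T_3$ has $\phi < \sigma_i$, so Type III agents also do not advertise. Therefore adding $T_3$ to $S$ does not trigger any further cascade, and $V(S \cup T_3) = V(S) \cup T_3$. For each $i \in T_3$ that was already in $V(S)$, the node was already being counted with payoff $+p$, so nothing changes. For each $i \in T_3 \setminus V(S)$, the node is now in the seed set and contributes $+p$ to the payoff with no associated cost, since it produces no new exposures. Summing over $T_3 \setminus V(S)$ gives $\pi(S \cup T_3) - \pi(S) = p \cdot |T_3 \setminus V(S)| \geq 0$.

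I do not expect a substantive obstacle: both statements are essentially bookkeeping, and the only thing to be careful about is the precise definition of $V(\cdot)$ in the generalized model. The main item to get right is articulating that only $T_4$ agents serve as internal nodes on propagation paths, so that the cascade generated by $S$ is unaffected by appending non-advertising agents to the seed set. Once that is in place, both equalities/inequalities follow from comparing payoff contributions node-by-node.
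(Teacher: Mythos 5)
Your proof is correct and follows essentially the same reasoning as the paper's (which is a brief two-item observation): Type I agents contribute nothing and never advertise, and Type III agents contribute $+p$ and never advertise. If anything, your version is slightly more careful — by writing the gain as $p\,|T_3 \setminus V(S)|$ you correctly account for Type III agents already exposed by the cascade from $S$, whereas the paper loosely says the payoff increases by "exactly $p$ per agent added."
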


\begin{proof}
These follow from the observation that: 
\begin{enumerate}
\item Agents of Type I are those that do not look at the product since $\phi$ is below their threshold $\tau_i$, and thus do not affect the payoff function when added to any seed set. 
\item Agents of Type III are those for which $\phi \geq \theta_i$, and therefore they accept the product, but do not advertise it to their friends. Therefore, adding such an agent to the seed set increases the payoff by exactly $p > 0$ per agent added.
\end{enumerate}
\end{proof}

In the simplest case, we have $ \tau_i = \theta_i$ and $\sigma_i = 1$,
such that agents will be either of Type I or III, and the optimal
seed set is precisely $S = T_3$. That is, agents only view a product
if they are going to accept it and they never advertise it to their
neighbors, and there is no cascade triggered as a result of seeding
agents.

When this is not the case, we will note that the results in the previous section can be adapted to this setting to find an optimal seed set efficiently. Given a network $G$ in this generalized setting, consider a corresponding network $G' = (V', E')$, which is the subgraph of $G$ consisting of agents of only Types II and IV. We can then apply the algorithm in the previous section to this subgraph $G'$ to find an optimal seed set. We claim that the union of this with agents of Type III yields an optimal seed set in $G$. 

\begin{thm}
Given a product with a value $\phi$ and a network $G$ with agents of parameters $\tau_i, \theta_i,$ and $\sigma_i$, there is a polynomial-time algorithm for finding an optimal seed set to maximize the payoff function. 
\end{thm}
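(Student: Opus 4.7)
The plan is to show that the reduction sketched in the paragraph preceding the theorem is correct: form $G' = G[T_2 \cup T_4]$, apply the polynomial-time algorithm from the previous section to $G'$ (treating Type IV as accepting and Type II as rejecting), and output the union of the resulting seed set with $T_3$. Since we already have a polynomial-time algorithm for the unbudgeted main model, correctness of this reduction is all that remains.

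First I would use Lemma \ref{lm:types} to narrow the search. Part (1) tells us that seeding Type I agents does not affect the payoff (and indeed the convention that the seed set consists of accepting nodes already rules them out). Part (2) tells us that adjoining $T_3$ to any seed set only helps. So there is an optimum of the form $S = T_3 \cup S_4$ with $S_4 \subseteq T_4$, and the remaining problem is to choose $S_4$.

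Next I would decompose the payoff. Only Type IV nodes advertise, so the cascade from $S_4$ propagates exclusively through Type IV vertices: $V(S) \cap T_4$ is exactly the set of Type IV nodes reachable from $S_4$ along paths whose internal vertices are all Type IV. Its boundary consists of the non-Type-IV neighbors of this reached set, split into Types I, II, III. Each exposed Type I agent contributes $0$, each exposed Type III agent contributes $+p$ but is already counted because $T_3 \subseteq S$, and each exposed Type II agent contributes $-q$. Thus
\begin{align*}
\pi(T_3 \cup S_4) \;=\; p|T_3| \;+\; p\,|T_4 \cap V(S)| \;-\; q\,|T_2 \cap V(S)|,
\end{align*}
and the first term is a constant independent of $S_4$.

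The key point I would then establish is that the remaining objective depends only on $G' = G[T_2 \cup T_4]$. Since Type I and Type III nodes never advertise, they cannot serve as internal vertices of any chain of recommendations; therefore deleting them neither creates nor destroys any cascade path between Type IV vertices. Consequently the set of Type IV nodes reached from $S_4$ is identical in $G$ and in $G'$, and the Type II nodes exposed are exactly the Type II neighbors (in either graph) of that reached set. In $G'$, Type IV plays the role of the accepting/advertising nodes and Type II the role of the rejecting nodes of the main model, so maximizing $p|T_4 \cap V(S)| - q|T_2 \cap V(S)|$ over $S_4 \subseteq T_4$ is exactly the problem solved by the previous section's algorithm. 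Invoking that algorithm yields the optimal $S_4^*$ in polynomial time; returning $T_3 \cup S_4^*$ gives the optimum in $G$. The only delicate step is the cascade-preservation claim when passing from $G$ to $G'$, but this is immediate from the fact that only Type IV vertices advertise; everything else is bookkeeping on the four-way payoff split.
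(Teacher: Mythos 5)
Your proposal is correct and follows essentially the same route as the paper: restrict to $G' = G[T_2 \cup T_4]$ via Lemma \ref{lm:types}, run the main-model algorithm there, and adjoin $T_3$. The paper phrases the correctness argument as a short proof by contradiction ("$\pi(S\setminus T_3) > \pi(S')$ would contradict optimality of $S'$"), whereas you supply directly the payoff decomposition $\pi(T_3\cup S_4) = p|T_3| + \pi_{G'}(S_4)$ and the cascade-preservation claim that the paper leaves implicit; this is a welcome filling-in of detail, not a different proof.
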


\begin{proof}
Given such a graph $G$, and a corresponding subgraph $G'$ with an optimal seed set $S'$, we argue that $S' \cup T_3$ is an optimal seed set in $G$. 

For the sake of contradiction, suppose $S$ is an optimal seed set in
$G$, such that $\pi(S) > \pi(S' \cup T_3)$. By Lemma \ref{lm:types}, we can
assume that $S$ does not include any agent of Type I and includes all
agents of Type III. 
This assumption implies $\pi( S \backslash T_3) > \pi(S')$.
But since $S \backslash T_3 \subseteq G'$, this contradicts the
optimality of $S'$.

\end{proof}

Returning to the implications for the Groupon effect, 
we note that a firm can efficiently maximize its payoff over the choice
of {\em both} the seed set and $\phi$.  Suppose the firm
knows the parameters $\tau_i, \theta_i, \sigma_i$ for each of the
agents $i$. Given $n$ agents, these values divide up the unit interval
into at most $4n$ subintervals $I_j$. It is easy to see that the
payoff depends only on which subinterval $\phi$ is contained in, but
does not vary within a subinterval.  Earlier, we saw the payoff need
not be monotone in $\phi$; but by trying values of $\phi$ in each of the $4n$
subintervals, the firm can determine a value of $\phi$ and a seed set
that maximize its payoff.

\section{Budgeted Seeding is Hard}\label{budgeted}

\def\payoffvar{\pi}

In this section, we show that the seeding problem, even for the
initial model, is NP-hard if we consider the case where there is a
budget $k$ for the size of the seed set and we want to find $S$
subject to the constraint that $|S| \leq k$. In the traditional
influence maximization literature, we leverage properties of the
payoff function such as its submodularity or supermodularity to give
algorithms that find optimal or near-optimal seed sets. The payoff
function here, however, is neither submodular nor supermodular.

\begin{eg}\label{counter2}

Take the network shown in figure below. Set $p = q = 1$: 
\begin{figure}[!ht]\label{fig:countereg2}
\centering
\begin{tikzpicture}
  [scale=1.25,auto=left,every node/.style={circle,fill=gray!20}]
  \node[fill=blue!50] (n1) at (1, 0) {1};
  \node[fill=blue!50] (n2) at (0.25, 1) {2};
  \node[fill=blue!50] (n3) at (1, 2) {3};
  \node[fill=red!50] (n4) at (2, 0.5) {4};
  \node[fill=red!50] (n5) at (2, 1.5) {5};
  \node[fill=blue!50] (n6) at (3, 1.5) {6};
  \node[fill=blue!50] (n7) at (3, 0.5) {7}; 
  \foreach \from/\to in {n1/n2, n2/n3, n1/n4, n2/n4,n1/n5,n2/n5,n3/n4,n3/n5,n5/n6,n6/n4,n5/n6,n1/n3,n4/n7,n5/n7,n6/n7}
    \draw (\from) -- (\to);
\end{tikzpicture}
\caption{Network where nodes in blue accept have $\theta_i \leq \phi$ and those in red have $\theta_i > \phi$.}
\end{figure}
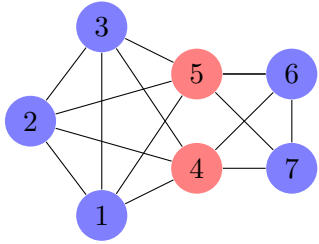

Supermodularity states that given $S' \subseteq S$ and $x \notin S$, $$f(S' \cup \{x\}) - f(S') \leq f(S \cup \{x\}) - f(S).$$ 

Set $S' = \{1 \}$, $S = \{1, 6\}$, and let $x$ be node $7$. Then, supermodularity would give: 
\begin{align*}
\pi(\{1, 7\}) - \pi(1) & \leq \pi \{1, 6, 7\} - \pi\{1, 6\}\\
2& \leq 0
\end{align*}

Submodularity states that, 
$$f(S' \cup \{x\}) - f(S') \geq f(S \cup \{x\}) - f(S).$$

A counterexample to this is obtained by setting $S' = \emptyset$, $S  = \{1\}$ and $ x = \{7\}$.

\end{eg}

  Here, we show an even stronger hardness
result: it is NP-hard to decide if there is a (budgeted) set yielding
positive payoff.  Since it is NP-hard to tell whether the optimum
in any instance is positive or negative, it is therefore also
NP-hard to provide an approximation algorithm with any 
multiplicative guarantee --- a sharp contrast with the 
multiplicative approximation guarantees available for
budgeted problems in more traditional influence maximization settings.

\begin{thm}
The decision problem of whether there exists a seed set $S$ with $|S| \leq k$ 
and $\payoffvar(S) > 0$ is NP-complete. 
\end{thm}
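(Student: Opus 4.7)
The plan is to establish NP-completeness in two steps. Membership in NP is immediate: given a candidate seed set $S$ with $|S| \le k$, the accepting-only reachability closure from $S$ can be computed by a single BFS in linear time, from which $|V^+(S)|$, $|V^-(S)|$, and hence $\pi(S)$, are obtained and compared to $0$ in polynomial time.

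For NP-hardness I would reduce from the $k$-\textsc{Clique} decision problem restricted to $d$-regular graphs, which remains NP-hard by a standard padding argument. The guiding intuition is that the shared-boundary phenomenon in Example~\ref{counter}---where several seed nodes can ``split the cost'' of their overlapping rejecting neighbors---pays off most when the chosen seeds correspond to a clique of the source graph. Given an instance $(H, k)$ with $H = (V_H, E_H)$ being $d$-regular and $k \ge 2$, construct $G$ as follows: for each $v \in V_H$ introduce an accepting node $v'$ together with $M-1$ accepting leaves attached only to $v'$; for each edge $e = \{u, w\} \in E_H$ introduce $\kappa$ rejecting nodes each adjacent to both $u'$ and $w'$. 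Set the budget to $k$ and the tuning parameters to $\kappa = k$ and $M = kd - \binom{k}{2} + 1$, using the basic payoff $\pi(S) = |V^+(S)| - |V^-(S)|$ (the hardness of this special case immediately implies hardness for the generalized payoff).

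By Lemma~\ref{lm:cluster-equiv}, we may restrict attention to seed sets that pick at most one representative per activated cluster, so a size-$j$ seed set is in bijection with a vertex subset $V_S \subseteq V_H$ of size $j$. A direct count then yields
\begin{equation*}
\pi(V_S) \;=\; M j \;-\; \kappa\bigl(j d - e_H(V_S)\bigr) \;=\; j\bigl(1 - \tbinom{k}{2}\bigr) \;+\; k \, e_H(V_S),
\end{equation*}
where $e_H(V_S)$ counts the $H$-edges inside $V_S$. Three cases close the argument: (i) a $k$-clique $V_S$ has $e_H(V_S) = \binom{k}{2}$ and $\pi(V_S) = k > 0$; (ii) a non-clique $V_S$ with $|V_S| = k$ has $e_H(V_S) \le \binom{k}{2} - 1$ and $\pi(V_S) \le 0$; and (iii) a $V_S$ with $|V_S| = j < k$ has $e_H(V_S) \le \binom{j}{2}$, giving $\pi(V_S) \le j\bigl[1 - k(k-j)/2\bigr] \le 0$ for $k \ge 2$. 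Hence $H$ has a $k$-clique iff some seed set of size at most $k$ in $G$ attains strictly positive payoff.

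The main obstacle is the arithmetic tuning of $M$ and $\kappa$. The gap separating cliques from near-cliques among size-$k$ subsets is only a single $H$-edge, so the admissible window for $M$ has width $\kappa/k$, forcing $\kappa \ge k$; simultaneously, $M$ must be kept small enough that sub-$k$ seed sets remain unprofitable across all possible edge counts $e_H(V_S) \le \binom{j}{2}$. The specific choice $\kappa = k$ and $M = kd - \binom{k}{2} + 1$ is engineered to thread exactly this needle, and the bulk of the proof is verifying that it does so uniformly over all $j \le k$.
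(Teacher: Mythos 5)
Your reduction is correct, and its skeleton is the same as the paper's: both reduce from $k$-Clique on $d$-regular graphs, both turn each edge of the source graph into a rejecting gadget so that a seed set corresponding to $V_S \subseteq V_H$ earns a positive term proportional to $|V_S|$ and a negative term proportional to $|V_S|\,d - e_H(V_S)$, and both exploit the one-edge gap between cliques and non-cliques among $k$-subsets. Where you genuinely diverge is in how the two sides of the ledger are weighted. The paper keeps a single subdivision node per edge and tunes the coefficients of the \emph{generalized} payoff, taking $q = 1$ and the fractional value $p = d - (k-1)/2 + \epsilon$ with $0 < \epsilon < 1/n^2$; you keep $p = q = 1$ and realize the weights combinatorially, attaching $M-1$ accepting leaves to each original vertex and $\kappa = k$ parallel rejecting nodes to each edge. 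Your arithmetic checks out: with $M = kd - \binom{k}{2} + 1$ the payoff of a $j$-subset is $j\bigl(1 - \binom{k}{2}\bigr) + k\,e_H(V_S)$, which equals $k > 0$ exactly for $k$-cliques, is at most $0$ for non-clique $k$-subsets, and is at most $j\bigl(1 - k(k-j)/2\bigr) \le 0$ for $j < k$ (using $e_H(V_S) \le \binom{j}{2}$ and $k \ge 2$); the gadget sizes are polynomial, and the reduction to one canonical representative per cluster is justified because each leaf lies in the cluster of its hub and distinct hubs are separated by rejecting nodes. What your version buys is a formally stronger conclusion---NP-hardness already for the unweighted payoff $|V^+(S)| - |V^-(S)|$, which the paper's reliance on a non-integer $p$ does not directly give---at the cost of a larger (though still polynomial) construction. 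Two small points worth making explicit: $M - 1 = kd - \binom{k}{2} \ge 0$ needs $d \ge (k-1)/2$, which holds since the instance is non-trivial only when $d \ge k-1$; and, like the paper, you should note that any rejecting or redundant seeds can be dropped without decreasing the payoff before invoking the bijection with subsets of $V_H$.
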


\begin{proof}

We will prove this using a reduction from the NP-complete
Clique problem on $d$-regular graphs: given a $d$-regular graph
$G$ and a number $k$, the question is to determine whether there
exists a {\em $k$-clique} --- a set of $k$ nodes that are all mutually adjacent.
(We also require $d \geq k$.)

We will reduce an instance of $k$-clique on $d$-regular graphs
to an instance of the decision version
of our budgeted seed set problem as follows.  Given such 
an instance of Clique specified by a $d$-regular graph $G$ and
a number $k$,
we construct an instance of the budgeted seed set
problem on a new graph $G'$ obtained from $G$ as follows: we replace each 
$(i, k) \in E$, with two new edges $(i, j), (j, k)$, where $j$ is a new node
introduced by {\em subdividing} $(i,k)$.
Let $V$ be the set of nodes originally in $G$, and $V'$ the set of
nodes introduced by subdividing.
In the seed set instance on $G'$, we define $\theta_i$ and $\phi$
such that $\theta_i < \phi$ for all $i \in V$ and $\theta_i > \phi$ for
all $i \in V'$.
We define the payoff coefficients $p, q$ 
by $q = 1$ and $p = d - (k-1)/2 + \epsilon$ for some
$0 < \epsilon < 1/n^2$.

We will show that $G$ has a $k$-clique if and only if there is a seed set
of size at most $k$ in $G'$ with positive payoff.

First, suppose that $S$ is a set of $k$ nodes in $G$ that are all
mutually adjacent, and consider the corresponding set of nodes $S$ in $G'$.
As a seed set, $S$ has $k$ accepting nodes and $kd - {k \choose 2}$
rejecting neighbors, since $G$ is $d$-regular but the nodes on
the ${k \choose 2}$ subdivided edges are double-counted.
Thus the payoff from $S$ is
$kp - (kd - {k \choose 2}) = k(p - d + (k-1)/2)$, which is positive
by our choice of $p$.

For the converse, suppose $S$ has size $k' \leq k$ and
has positive payoff in $G'$.
Since the seed set consists entirely of accepting neighbors 
(any others can be omitted without decreasing the payoff),
$S \subseteq V$, and hence so the neighbors of $S$ reject the product. 
If $S$ induces $\ell$ edges in $G$, then the payoff from $S$
includes a negative term from each neighbor, with the nodes on
the $\ell$ subdivided edges double-counted, so the payoff is
$k'p - (k'd - \ell)$.
If $|S| = k' < k$, then since $\ell \leq {k' \choose 2}$,
the payoff is at most
$k'p - (k'd - {k' \choose 2}) = k'(p - d + (k'-1)/2)$, which
is negative by our choice of $p$.
If $|S| = k$ and $\ell \leq {k \choose 2} - 1$, then the payoff is at most
$kp - (kd - ({k \choose 2} - 1)) = k(p - d + (k-1)/2 - 1/k)$,
which again is negative by our choice of $p$.
Thus it must be that $|S| = k$ and $\ell = {k \choose 2}$, so
$S$ induces a $k$-clique as required.
\end{proof}

\section{Experimental Results}

In this section, we present some computational results using datasets
obtained from SNAP (Stanford Network Analysis Project). In particular,
we consider an email network from a European
research institution \cite{email2,email1} and a text message network
from a social-networking platform at UC-Irvine \cite{text}.

The former is a directed network of emails sent between employees
over an 803-day period, with $986$ nodes and $24929$ directed edges.
The latter is a directed network of
text messages sent between students through an online social network
at UC Irvine over a 193-day period, with $1899$ nodes
and $20296$ directed edges. In both networks,
we use the edge $(i,j)$ to indicate that $i$ sent at least one
email or text to node $j$ over the time period considered.



For both of these networks, we present results corresponding to the
general model. We consider 100 evenly-spaced values of $\phi$ in $[0,
1]$ and compare the seed set obtained by our algorithm with some
natural baselines. The parameters 
$\tau_i \leq \theta_i \leq \sigma_i$ for each agent 
are chosen as follows: we draw three numbers independently
from an underlying distribution (we analyze both the
uniform distribution on $[0,1]$ and the 
Gaussian distribution with mean $0.5$ and standard deviation $0.1$);
we then sort these three numbers in non-decreasing order and
set them to be $\tau_i, \theta_i,$ and $\sigma_i$ respectively.


For each $\phi$ we run 100 trials and
present the average payoff. The average time to run one simulation is
$0.915$ seconds for the text network and $0.454$ seconds for the email
network.  This includes the time to read the data and assemble
the network; the average time spent only on computing the min-cut for the
corresponding network is $0.052$ and $0.054$ seconds respectively.

For each of these figures, we give a natural upper-bound which is the
number of agents such that $\theta_i < \phi$. This includes agents of
Type III and IV. We note that the seed set obtained by our algorithm
often gives a payoff close to this upper bound. We compare this to 
two natural baselines: the first sets agents of Type III to be the seed set 
and the second sets agents of both Types III and IV to be the seed set. 
We show that the first baseline performs well for lower values of $\phi$, 
where the second baseline underperforms significantly; and, the second 
picks up performance significantly for higher values of $\phi$ while the 
first baseline suffers. The seed set obtained by our algorithm, on the 
other hand, outperforms both baselines by a notable margin for moderate 
values of $\phi$. This gap in performance corresponds to the overexposure 
effect in our models.

\begin{figure*}[!ht]\label{fig:unifgen}
\centering
\includegraphics[width = 7.25cm,height=5.75cm]{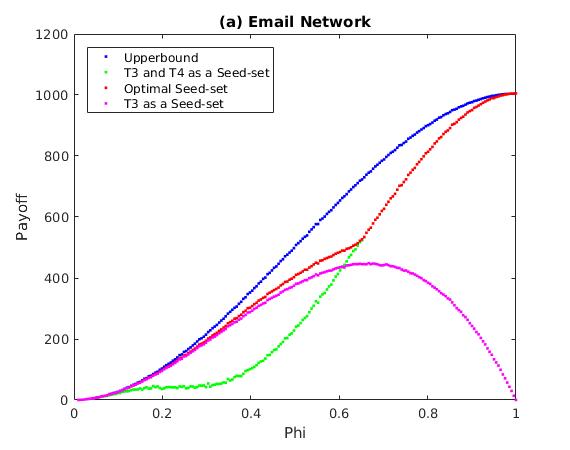}
\includegraphics[width = 7.25cm,height=5.75cm]{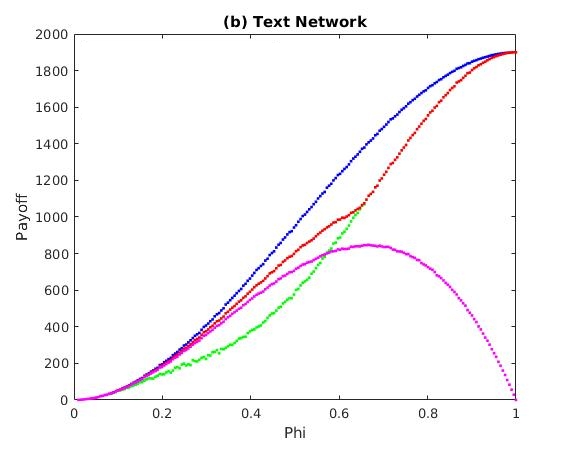}
\caption{Payoff as a function of product quality $\phi$ for (a) email and (b) text networks for the general model where $\tau_i, \theta_i, \sigma_i$ are chosen from the uniform distribution on $[0, 1]$ . The blue curve represents the natural upper bound of the total number of agents with $\theta_i < \phi$.  The green line corresponds to the payoff obtained by seeding agents whose $\tau_i \leq \phi$. The purple line represents the number of agents of Type III, where $\phi \in [\theta_i,\sigma_i)$ and the red line corresponds to the seed set selected by our algorithm. The difference between the red and green curve captures the Groupon Effect of overexposure.}
\end{figure*}

\begin{figure*}[!ht]\label{fig:gaussgen}
\centering
\includegraphics[width = 7.25cm,height=5.75cm]{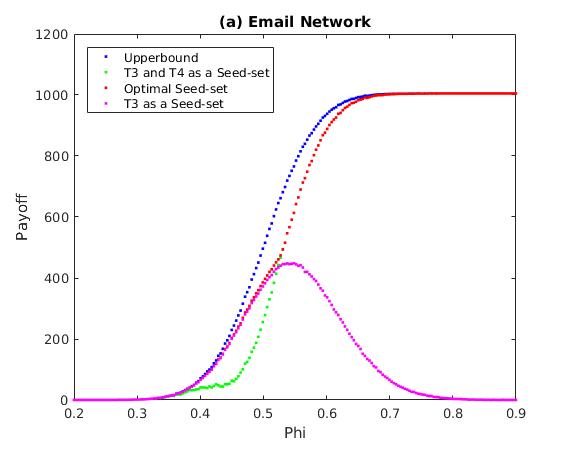}
\includegraphics[width = 7.25cm,height=5.75cm]{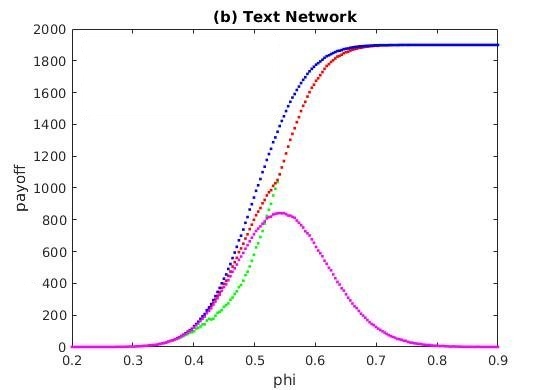}
\caption{These plots give payoff for (a) email and (b) text network for the general model with parameters chosen from the Gaussian distribution with mean $0.5$ and standard deviation $0.1$ for 100 evenly-spaced $\phi$ values in $[0, 1]$.}
\end{figure*}

Each of the figures show that the seed set chosen by our algorithm
outperforms these natural baselines. In Figure 2, we note for $\phi <
2/3$ the optimal seed set obtained through our algorithm is close to
picking only agents of Type III. Adding on agents of Type IV performs
worse than both seeding just agents of Type III or the optimal
seed set. This changes for $\phi$ values over $2/3$. Here, the number
of agents of Type III drops, and thus the payoff obtained by seeding
agents of Type III drops with it. 
On the other hand, the seed set consisting of all
agents of Types III and IV picks up performance, 
coming close to the 
optimal seed set for $\phi \approx 0.7$. This behavior appears in 
both networks.

\section{Further Related Work}

As noted in the introduction, our work --- through its focus on
selecting a {\em seed set} of nodes with which to start a cascade ---
follows the motivation underlying the line of theoretical work on 
influence maximization \cite{dr,kkt,rd}. There has been some 
theoretical work showing the counter-intuitive outcome where 
increased effort results in a less successful spread. An example is \cite{sela}, 
where they show that due to the separation of the infection and viral 
stage, there are cases where an increased effort can result in a lower 
rate of spread. A related line of work has made use of rich datasets on 
digital friend-to-friend recommendations on e-commerce sites
to analyze the flow of product recommendations through an
underlying social network \cite{leskovec-ec06j}. Further work has
experimentally explored influence strategies, with individuals
either immediately broadcasting their product adoption to 
their social network, or selecting individuals to recommend
the product to~\cite{aral2011creating}.

The consequences of negative consumer reactions have been
analyzed in a range of different domains.
In the introduction we noted examples involving
Groupon \cite{group}, book prizes \cite{good}, and 
on-line media collections \cite{aizen-batting}.
Although experimentally introduced negative ratings tend to be
compensated for in later reviews, positive reviews 
can lead to herding effects ~\cite{muchnik2013social}.
There has also been research seeking to quantify 
the economic impact of negative ratings, in contexts
ranging from seller reputations in on-line auctions
\cite{bajari-auction-reputation,resnick-value-reputation}
to on-line product reviews \cite{pang-lee-sentiment-book}.
This work has been consistent in ascribing non-trivial
economic consequences to negative consumer impressions
and their articulation through on-line ratings and reviews. 
Recent work has also considered the rate at which social-media content
receives ``likes'' as a fraction of its total impressions,
for quantifying a social media audience's response
to cascading content \cite{rotabi-www17-audience}.

The literature on pricing goods with network effects is another domain
that has developed models in which consumers are
heterogeneous in their response to diffusing content.
The underlying models are different from what we pursue here;
a canonical structure in the literature on pricing with network effects
is a set of consumers with different levels of willingness to pay for a product
\cite{katz-shapiro-net-ext}.
This willingness to pay can change as the product becomes more popular;
a line of work has thus considered how a product with network effects
can be priced adaptively over time as it diffuses through the network
\cite{arthur-pricing-social,hartline-pricing-social}.
The variation in willingness to pay can be viewed as a type of
``criticality,'' with some consumers evaluating products 
more strictly and others less strictly.
But a key contrast with our work is that highly critical 
individuals in these pricing models do not generally confer a negative
payoff when they refuse to purchase an item.

\section{Conclusion}

Theoretical models of viral marketing in social networks have generally 
used the assumption 
that all exposures to a product are beneficial to the firm conducting
the marketing.  
A separate line of empirical research in marketing, however, provides a more
complex picture, in which different potential customers may have
either positive or negative reactions to a product, and it can be
a mistake to pursue a strategy 
that elicits too many negative reactions from potential customers.

In this work, we have 
proposed a new set of theoretical models for viral marketing,
by taking into account these types of overexposure effects.
Our models make it possible to consider the optimization trade-offs
that arise from trying to reach a large set of positively inclined
potential customers while reducing the number of negatively inclined
potential customers who are reached in the process.
Even in the case where the marketer has no budget on the number
of people it can expose to the product, this tension between
positive and negative reactions leads to a non-trivial optimization problem.
We provide a polynomial-time algorithm for this problem, using
techniques from network flow, and we prove hardness for the case in 
which a budget constraint is added to the problem formulation.
Computational experiments show how our polynomial-time algorithm
yields strong results on network data. 

Our framework suggests many directions for future work.
It would be interesting to integrate the role of negative
payoffs in our model here with other technical components 
that are familiar from the literature on influence maximization,
particularly the use of richer (and potentially probabilistic)
functions governing the spread from one participant in the network
to another.
For example, when nodes have non-trivial thresholds for adoption ---
requiring both that they evaluate the product positively and also
that they have heard about it from at least $k$ other people, 
for some $k > 1$ --- how significantly do the structures of optimal
solutions change?

It will also be interesting to develop richer formalisms for the process by
which positive and negative reactions arise when potential customers
are exposed to good or bad products.
With such extended formalisms we can more fully bring together
considerations of overexposure and reputational costs into the
literature on network-based marketing.

\section*{Acknowledgements}
The first author was supported in part by a Google scholarship, a Facebook 
scholarship, and a Simons Investigator Award and the third author was supported 
in part by a Simons Investigator Award, an ARO MURI grant, a Google Research 
Grant, and a Facebook Faculty Research Grant.

\bibliographystyle{plain}
\bibliography{sigproc}

\begin{thebibliography}{10}

\bibitem{aizen-batting}
Jon Aizen, Dan Huttenlocher, Jon Kleinberg, and Antal Novak.
\newblock Traffic-based feedback on the {W}eb.
\newblock {\em Proc. Natl. Acad. Sci. USA}, 101(Suppl.1):5254--5260, 2004.

\bibitem{aral2011creating}
Sinan Aral and Dylan Walker.
\newblock Creating social contagion through viral product design: A randomized
  trial of peer influence in networks.
\newblock {\em Management science}, 57(9):1623--1639, 2011.

\bibitem{arthur-pricing-social}
David Arthur, Rajeev Motwani, Aneesh Sharma, and Ying Xu.
\newblock Pricing strategies for viral marketing on social networks.
\newblock In {\em Workshop on Internet and Network Economics}, 2009.

\bibitem{bajari-auction-reputation}
Patrick Bajari and Ali Hortascu.
\newblock Economic insights from internet auctions.
\newblock {\em Journal of Economic Literature}, 42:457--486, 2004.

\bibitem{berger2007consumers}
Jonah Berger and Chip Heath.
\newblock Where consumers diverge from others: Identity signaling and product
  domains.
\newblock {\em Journal of Consumer Research}, 34(2):121--134, 2007.

\bibitem{group}
John~W. Byers, Michael Mitzenmacher, and Georgios Zervas.
\newblock The groupon effect on yelp ratings: A root cause analysis.
\newblock In {\em Proceedings of the 13th ACM Conference on Electronic
  Commerce}, EC '12, pages 248--265, New York, NY, USA, 2012. ACM.

\bibitem{reputation2}
Karen Cravens, Elizabeth~Goad Oliver, and Sridhar Ramamoorti.
\newblock The reputation index:: Measuring and managing corporate reputation.
\newblock {\em European Management Journal}, 21(2):201 -- 212, 2003.

\bibitem{reputation1}
Anca~E. Cretu and Roderick~J. Brodie.
\newblock The influence of brand image and company reputation where
  manufacturers market to small firms: A customer value perspective.
\newblock {\em Industrial Marketing Management}, 36(2):230 -- 240, 2007.
\newblock Project Marketing and the Marketing of Solutions.

\bibitem{dr}
Pedro Domingos and Matt Richardson.
\newblock Mining the network value of customers.
\newblock In {\em Proceedings of the Seventh ACM SIGKDD International
  Conference on Knowledge Discovery and Data Mining}, KDD '01, pages 57--66,
  New York, NY, USA, 2001. ACM.

\bibitem{hartline-pricing-social}
Jason Hartline, Vahab Mirrokni, and Mukund Sundararajan.
\newblock Optimal marketing strategies over social networks.
\newblock In {\em Proc. 17th International World Wide Web Conference}, 2008.

\bibitem{hb}
Yansong Hu and Christophe~Van den Bulte.
\newblock Nonmonotonic status effects in new product adoption.
\newblock {\em Marketing Science}, 33(4):509--533, 2014.

\bibitem{turf}
Yogesh~V. Joshi, David~J. Reibstein, and Z.~John Zhang.
\newblock Turf wars: Product line strategies in competitive markets.
\newblock {\em Marketing Science}, 35(1):128--141, 2016.

\bibitem{jurvetson-viral}
Steve Jurvetson.
\newblock What exactly is viral marketing?
\newblock {\em Red Herring}, 78, 2000.

\bibitem{katz-shapiro-net-ext}
Michael~L. Katz and Carl Shapiro.
\newblock Network externalities, competition, and compatibility.
\newblock {\em American Economic Review}, 75(3):424--440, June 1985.

\bibitem{kkt}
David Kempe, Jon Kleinberg, and \'{E}va Tardos.
\newblock Maximizing the spread of influence through a social network.
\newblock In {\em Proceedings of the Ninth ACM SIGKDD International Conference
  on Knowledge Discovery and Data Mining}, KDD '03, pages 137--146, New York,
  NY, USA, 2003. ACM.

\bibitem{good}
Balázs Kovács and Amanda~J. Sharkey.
\newblock The paradox of publicity: How awards can negatively affect the
  evaluation of quality.
\newblock {\em Administrative Science Quarterly}, 59(1):1--33, 2014.

\bibitem{leskovec-ec06j}
Jure Leskovec, Lada Adamic, and Bernardo Huberman.
\newblock The dynamics of viral marketing.
\newblock {\em ACM Transactions on the Web}, 1(1), May 2007.

\bibitem{email2}
Jure Leskovec, Jon Kleinberg, and Christos Faloutsos.
\newblock Graph evolution: Densification and shrinking diameters.
\newblock {\em ACM Trans. Knowl. Discov. Data}, 1(1), March 2007.

\bibitem{muchnik2013social}
Lev Muchnik, Sinan Aral, and Sean~J Taylor.
\newblock Social influence bias: A randomized experiment.
\newblock {\em Science}, 341(6146):647--651, 2013.

\bibitem{pang-lee-sentiment-book}
Bo~Pang and Lillian Lee.
\newblock {\em Opinion Mining and Sentinment Analysis}.
\newblock Number 2(1-2) in Foundations and Trends in Information Retrieval. Now
  Publishers, 2008.

\bibitem{text}
Pietro Panzarasa, Tore Opsahl, and Kathleen~M. Carley.
\newblock Patterns and dynamics of users' behavior and interaction: Network
  analysis of an online community.
\newblock {\em Journal of the American Society for Information Science and
  Technology}, 60(5):911--932, 2009.

\bibitem{email1}
Ashwin Paranjape, Austin~R. Benson, and Jure Leskovec.
\newblock Motifs in temporal networks.
\newblock In {\em Proceedings of the Tenth ACM International Conference on Web
  Search and Data Mining}, WSDM '17, pages 601--610, New York, NY, USA, 2017.
  ACM.

\bibitem{resnick-value-reputation}
Paul Resnick, Richard Zeckhauser, John Swanson, and Kate Lockwood.
\newblock The value of reputation on {eBay}: {A} controlled experiment.
\newblock {\em Experimental Economics}, 9:79--101, 2006.

\bibitem{rd}
Matthew Richardson and Pedro Domingos.
\newblock Mining knowledge-sharing sites for viral marketing.
\newblock In {\em Proceedings of the Eighth ACM SIGKDD International Conference
  on Knowledge Discovery and Data Mining}, KDD '02, pages 61--70, New York, NY,
  USA, 2002. ACM.

\bibitem{rogers-diffusion}
Everett Rogers.
\newblock {\em Diffusion of Innovations}.
\newblock Free Press, fourth edition, 1995.

\bibitem{rotabi-www17-audience}
Rahmtin Rotabi, Krishna Kamath, Jon Kleinberg, and Aneesh Sharma.
\newblock Cascades: {A} view from the audience.
\newblock In {\em Proc. 26th International World Wide Web Conference}, 2017.

\bibitem{sela}
Alon Sela, Erez Shmueli, Dima Goldenberg, and Irad Ben-Gal.
\newblock Why spending more might get you less, dynamic selection of
  influencers in social networks.
\newblock In {\em 2016 IEEE International Conference on the Science of
  Electrical Engineering (ICSEE)}, pages 1--4, Nov 2016.

\bibitem{health}
Melanie Wakefield, Brian Flay, Mark Nichter, and Gary Giovino.
\newblock Effects of anti-smoking advertising on youth smoking: A review.
\newblock {\em Journal of Health Communication}, 8:229--247, 2003.

\end{thebibliography}

\end{document}